\crefname{hypothesis}{Hypothesis}{Hypotheses}
\Crefname{ALC@unique}{Line}{Lines}
\colorlet{texcscolor}{blue!50!black}
\colorlet{texemcolor}{red!70!black}
\colorlet{texpreamble}{red!70!black}
\colorlet{codebackground}{black!25!white!25}
\lstdefinestyle{siamlatex}{%
	style=tcblatex,
	texcsstyle=*\color{texcscolor},
	texcsstyle=[2]\color{texemcolor},
	keywordstyle=[2]\color{texemcolor},
	moretexcs={cref,Cref,maketitle,mathcal,text,headers,email,url},
}
\DeclareTotalTCBox{\code}{ v O{} }
{ 
	fontupper=\ttfamily\color{black},
	nobeforeafter,
	tcbox raise base,
	colback=codebackground,colframe=white,
	top=0pt,bottom=0pt,left=0mm,right=0mm,
	leftrule=0pt,rightrule=0pt,toprule=0mm,bottomrule=0mm,
	boxsep=0.5mm,
	#2}{#1}
	\title{Guide to Using SIAM's \LaTeX\ Style\thanks{Submitted to the editors DATE.
			\funding{Funding information goes here.}}}
	\author{Dianne Doe\thanks{Imagination Corp., Chicago, IL (\email{ddoe@imag.com}).}
		\and Paul T. Frank\thanks{Department of Applied Math, Fictional University, Boise, ID (\email{ptfrank@fictional.edu}, \email{jesmith@fictional.edu}).}
		\and Jane E. Smith\footnotemark[3]}
\patchcmd\newpage{\vfil}{}{}{}
\begin{document}

		

		\title{\bf A practical Single Source Shortest Path algorithm for random directed graphs with arbitrary weight in expecting linear time} 
		\date{}
		\author{Li Dexin\thanks{Fujian Zhangzhou No.1 High School (\email{1171000806@stu.hit.edu.cn}).}
			}
		
		\renewcommand{\thefootnote}{\fnsymbol{footnote}}
		
		\maketitle

		{\noindent\small\section*{Abstract}
			In this paper, I present an algorithm called Raffica algorithm for Single-Source Shortest Path(SSSP). On random graph, this algorithm has linear time complexity(in expect). More precisely, the random graph uses configuration model, and the weights are distributed mostly positively. It is also linear for random grid graphs. Despite I made an assumption on the weights of the random graph, this algorithm is able to solve SSSP with arbitrary weights; when a negative cycle exists, this algorithm can find it out once traversed. The algorithm has a lot of appliances. }
		
		\vspace{1ex}
		{\noindent\small{\bf Keywords:}
			Time complexity; Negative cycles; Single Source Shortest Path; Raffica algorithm; Random Graph. }
		\section{Introduction}
		Single Source Shortest Path problem(SSSP) is the most basic problem in graph optimization studied in computer science. It is widely used in mathematical modeling, such as traffic regulation, \textbf{Systems of Difference Constraints}, and so on.
		\par\textbf{Prior work} Dijkstra gave the nonnegetive-weight problem a $\Theta(N^2)$ solution\cite{Dijkstra1959A}. This algorithm sorts the distance of the vertices. Based on Dijkstra's algorithm, using priority queues can make a new approach $\Theta(M\lg N)$. Using Fibonacci Heap will make it only cost $\Theta(M + N\lg N)$, which is almost linear. We can also use a bucket to solve it in $\Theta(M+W)$ \cite{introductionalgorithm}, where $W$ is the max weight of the graph. We can even use the characteristic of RAM to solve it in $\Theta(M\lg N \lg N)$ or $\Theta(M + N (\lg N)^{1/3+\epsilon})$\cite{introductionalgorithm}. Using multi-level bucket\cite{goldberg2001a}, it can solve SSSP on graph with edge lengths satisfy nature distribution in expecting linear time. But $Dijkstra's Algorithm$ can only handle nonnegative-weight graphs. 
		\par Thorup's algorithm\cite{Thorup:1999:USS:316542.316548} can solve SSSP on undirected graph in linear time complexity. It is a very important algorithm theoretically. However, it is so complex that it runs actually extremely slow in real life.
		\par Sometimes, we need to solve this problem with arbitrary weights, like solving \textbf{Systems of Difference Constraints}. Negative cycles may appear, so the negative cycle detecting problem is also an important problem. \textit{Bellman-Ford Algorithm}\cite{Bellman1958On} is a basic method to solve it. 	The \textit{queue optimized Bellman-Ford Algorithm}cite{Gilsinn1973A}, which had a complexity of $\Theta(MN)$ as we considered, is an effective improvement. Duan\cite{Duan1994A} claimed that the complexity of \textit{queue optimized Bellman-Ford Algorithm} is $O(M)$, and gave the algorithm a name \textit{Shortest Path Faster Algorithm(SPFA)}\footnote{The following notation SPFA is only for convenience.}. Later the claim was proved wrong. \textit{SPFA} does run fast, when the graph is not specially constructed and no negative cycles are included. But it runs as slow as $O(MN)$ on graphs with negative cycles in expect. It also runs as slow as $O(MN)$ in grid graph, which means it cannot fit a lot of status in real life.

		\par \textbf{My contribution}\footnote{According to \cite{Cherkassky1999Negative}, it seemed that my algorithm is an improvement of Tarjan's Algorithm on a technical report in 1981, but I got the original one independently. However, the linear expectation should not fit for Tarjan's original Algorithm.} I found a simple but effective method to improve \textit{SPFA}. The improved algorithm is so-called \textit{Raffica algorithm}. Considering that the single source shortest path forms a tree, I denote it as an \textit{Auxiliary Tree} and use a breadth-first-like search to maintain it. The relaxing operation in priority-queue-based Dijkstra's algorithm only happens once on each vertex. But it may happen a lot of times on every vertex in Raffica algorithm. Fortunately, using \textit{Auxiliary Tree} and my optimization, it can cut down so many trivial relaxing that each vertex is relaxed expecting $\Theta(1)$ times in random graph. The operation that cuts the relaxing down is called `Raffica`. Predicting the average count of x-depth node, I can control the density of \textit{Raffica} operation. When \textit{Auxiliary Tree} fails to maintain, it turns out to be a cycle on the tree. It means there exists a negative cycle, therefore the problem has no solution.
		\par The configuration model\cite{bender1978the}\cite{bollobas1980a} is a random graph model based on half-edge matching. Define a random multi-graph $G^*(N, (d_i)_1^N)$ with a given degree sequence $(d_i)_1^N$. $d_i$ half-edges are associated to each node $i$. All the half-edges are matched uniformly to become an edge. Therefore, a multi-graph allowing self-loops was created. The weights of the edges are given by a weight sequence. The \textit{Raffica algorithm} can solve the arbitrary-weight SSSP, however, I have to assume the weight is mostly non-negative, else the negative cycle will be found in sublinear time. Except this constraint, the distribution of the weights can be arbitrary.
		\par Negative cycle detecting is also a common model. System of Difference Constraints is an example. Comparing with \textit{SPFA}, \textit{Raffica algorithm} can find the negative cycle once traversed.
		\par \textit{Raffica algorithm} is $O(MN)$ in worst case scenario, while \textit{SPFA} is $O(MN)$ too. However, the worst case scenario of \textit{Raffica algorithm} will hardly appear in traffic problems(in other words, a random near-grid graph), while \textit{SPFA} will easily fall into the worst case scenario by grid graph. The method to improve the worst case of my algorithm is only to reconstruct the graph, or to change the method of searching instead of a breadth first search. However, the common reconstruct method is still an open problem. Though, my algorithm is still practical in traffic problems and random graphs.
		\par Following these conclusions, we can solve the All-Pairs Shortest Path(APSP) problem in $O(MN)$ on random graphs in expect, which is better than \textit{Floyd Algorithm}\cite{floyd1962algorithm}. We can solve the minimum average weight cycle problem by simply using dichotomy and \textit{Raffica algorithm}, therefore we get a $O(M\lg W)$ solution in expect, where $W$ stands the maximum weight of the graph.

		\section{Preliminaries}
		The following graph is a directed weighted graph $G = \left(V, E, W\right)$, $W\in \mathbb{N}$. As an SSSP problem, we denote the source vertex as $S$. $M$ is the count of edges while $N$ is the count of vertices.
		\par\textit{Auxiliary Tree} $T = \left(V, F\right)$ is a tree used in \textit{Raffica algorithm}.
		\par The hop-diameter of the graph is defined as the maximum count of vertices on the shortest path $u\to v$ on each $u, v\in V$. It is denoted as $D$. On a sparse random graph, $D = \log N$ in expect\cite{CornellThe}.
		\par\textit{SP Tree} $T' = \left(V, F\right)$ is the Shortest Path Tree of the SSSP. \textit{Auxiliary} Tree is convergent during the \textit{Raffica algorithm}. Finally, it will be the same as \textit{SP Tree}.
		\par $F_i$ stands the father node of $i$ on any tree.
		\par The depth of a vertex indexed $x$ is denoted as $d_x$ or $depth_x$. It stands the count of vertices on the path from source $S$ to $x$. $d_S = 1$.
		\par Both \textit{Auxiliary Tree} and \textit{SP Tree} are rooted by $S$;
		\par$Inqueue$ is an array saving the label whether the vertex should be in queue, $Inqueue_i$ denotes whether the vertex $i$ is in queue.
		\par$dis_x$ is the distance $S\to x$ of the SSSP problem.
		\par Relaxing is an operation on an edge from \textit{Bellman-Ford Algorithm} and \textit{Dijkstra's Algorithm}. When an edge $E = (u\to v, w)$ is relax-able, it means $dis_u + w < dis_v$.
		\par
		Denote the iteration count of a vertex $X$ is how many vertices it visited from the source vertex $S$ to $x$, in other words, the depth on the \textit{Auxiliary Tree}. An iteration of a BFS-like algorithm(i.e. the following \textit{SPFA} and \textit{Raffica algorithm}) is a series of relaxing where the iteration counts are equal.
		\par During the iteration, I call the vertex in queue as the \textit{Dark Point}.
		\newline\newline\newline\newline\newline\newline
		\begin{figure}[h]
			\centering  
			\includegraphics[width=0.6\linewidth]{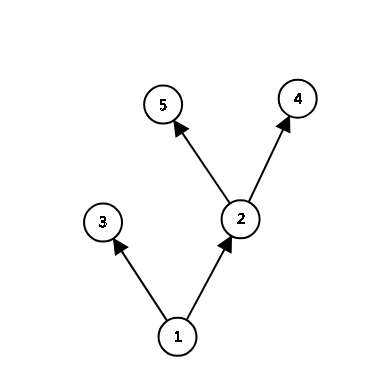}  
			\caption{Example Graph}  
		\end{figure}
		\par For example, if the vertex 1 is the source $S$, then the iteration count of vertex 1 is 1, the count of vertex 2 and 3 is 2, the count of vertex 4 and 5 is 3.
		\par
		The random graph uses the configuration model described in \cite{bollobas1980a} and \cite{bender1978the}. For the node indexed $i\in [1, N]$, the degree sequence $d_i$ describe the count of the half-edges to be associated. Two half-edges will be paired uniformly to form an undirected edge. Though, it is assumed the performance on undirected random graph is similar to directed one. The degree sequence is given by $(d_i)_1^N = {\frac{M}{N}}$. For $W\in \mathbb{N}$, without loss of generality, the weights satisfy i.i.d.(identically distributed) distribution in $(-1, 1)$. We assume the expectation of the weight is higher than 0, and the non-negative weights should occur with high probability(w.h.p.), because the often appearing negative weights will make it exist some negative cycles in averagely a constant neighborhood from $S$, making \textit{Raffica Algorithm} solve in sublinear time complexity.

		\subsection{Bellman-Ford Algorithm}
		\textit{Bellman-Ford Algorithm}\cite{Bellman1958On} is a classic algorithm solving the SSSP on arbitrary weighted graph. It uses relaxing to iterate and after $N-1$ times iteration, we get the answer. If in the $N$th iteration there is any vertex relaxed, those vertices form negative cycle(s).
		\begin{algorithmic}[h]
			\REQUIRE{Edges $E = {u\to v, weight}$, Source $S$, Vertex count $N$, Edge count $M$}
			\ENSURE{Distance $dis$}
			\FOR{$i=1\to N$} 
				\STATE{$dis_i \leftarrow \infty$}
			\ENDFOR
			\FOR{$i=1\to N-1$}
				\FOR{$i=1\to M$}
					\IF{$dis_{Edge_i.u} + Edge_i.w < dis_{Edge_i.v}$}
					\STATE{$dis_{Edge_i.v} \leftarrow dis_{Edge_i.u} + Edge_i.w$}
					\ENDIF
				\ENDFOR
			\ENDFOR
		\end{algorithmic}
		The time complexity is obviously $O(MN)$. There are many kinds of improving methods like Yen's, and so on. The following \textit{SPFA} is also a kind of improving.

		\subsection{SPFA}
		\par\textit{SPFA} is the queue optimized \textit{Bellman-Ford Algorithm}.
		\par\textit{SPFA} uses a queue to keep the vertices, a little bit like BFS. \textit{SPFA} uses Adjacency table. During the \textit{SPFA}, we search and relax, pushing the relax-able vertices into the queue and update the distance.
		\par The following pseudocode describes how \textit{SPFA} works.
		\\\begin{algorithmic}[h]
			\REQUIRE{Edges $E = {u\to v, weight}$, Source $S$, Vertex count $N$, Edge count $M$}
			\ENSURE{Distance $dis$}  
			\STATE{$queue\leftarrow S$}
			\FOR{$i=1\to N$}
			\STATE{$dis_i \leftarrow \infty$}
			\STATE{$inqueue_i\leftarrow false$}
			\STATE{$dis_S \leftarrow 0, inqueue_i \leftarrow true$}
			\ENDFOR
			\WHILE{queue is not empty}
				\STATE{$x \leftarrow queue.front$}
				\STATE{queue.pop}
				\STATE{$inqueue_x \leftarrow false$}
				\FOR{each E adjacent by x}
					\IF{$dis_{E.u} + E.weight < dis_{E.v}$}
						\STATE{$dis_{E.v} \leftarrow dis_{E.u} + E.weight$}
						\IF{($inqueue_{E.v} = false$)}
							\STATE{$inqueue_{E.v} \leftarrow true$}
							\STATE{$queue \leftarrow E.v$}
						\ENDIF
					\ENDIF
				\ENDFOR
			\ENDWHILE
		\end{algorithmic}
		
		\section{Raffica algorithm}
		\textit{Raffica algorithm} is an improved \textit{SPFA}.
		\textit{Raffica algorithm} is based on this theorem:
		\newtheorem* {mypro3}{Theorem}
		\begin{mypro3} {
				The solution of the SSSP forms a tree.
		}\end{mypro3}
		
		\begin{proof}
			
			Obviously the solution includes all the vertices reachable. Suppose the solution of the SSSP includes a cycle, and $X$ is a vertex on the cycle. It means if we go through the cycle from $X$ to $X$, the distance does not increase. If the sum of the weight of the whole cycle is positive, it will not be the SSSP, because going through this cycle will get a worse answer. If the weight of the cycle is zero, we need not go through this cycle. If negative, we will go through the cycle for infinite times, so that the answer does not exist. So the solution of SSSP does not include any cycles.
		\end{proof}
		\par The method is pretty simple. Maintain the \textit{Auxiliary Tree}. When relaxing an edge $E:u\to v$, we set $v$ as $u$'s son on the \textit{Auxiliary Tree} $T$. If $v$ already has a father, we break the edge $v.father\to v$ on the \textit{Auxiliary Tree} and reset $v.father$ as $u$. This operation is called `$Raffica\left(u, v\right)$`. Then we successfully maintain a tree, except that $v$ is an ancestor of $u$ on the tree.
		\par If $v$ is an ancestor of $u$ on the tree, this SSSP has a negative cycle $u\to ... \to u$. When we can relax $E:u\to v$, it means $dis_u + E.w < dis_v$. If this inequality comes to a cycle during the iteration, it only means that there is a negative cycle. Therefore, the SSSP problem has no solutions. So what we maintain is absolutely a tree.
		\par For the BFS can traverse all these vertices reachable, if there is a negative cycle reachable, we can absolutely find it.
		\par So \textbf{finding a cycle on the Auxiliary Tree is the necessary and sufficient condition of existing a reachable negative cycle.}

		\newtheorem*{mypro2}{Theorem}
		\begin{mypro2} Shortest Path(SP) can be divided into smaller SP. \end{mypro2}
		i.e. If SP $u\to v$ includes vertex $a$, we can conclude that $u\to a$ and $a\to v$ on the SP $u\to v$ are also SP in smaller problems.
		\\
		\\During the iteration, suppose there is a vertex $u$ in the queue, and its ancestor $v$ is just \textit{Raffica-ed}. Before\textit{ Raffica-ed}, $dis_v + \left(dis_u - dis_v\right) = dis_u$. $\left(dis_u - dis_v\right)$ means the length of the shortest path from $v$ to $u$. After\textit{ Raffica-ed}, $dis_v$ become lower, so $dis_u$ should become lower too, due to the Theorem 2nd.
		\\For the vertex $u$ is still in queue, it uses an earlier distance data, so we can remove $u$ from the queue, else there will be redundant relaxing. So we clear the in-queue label of $v$'s subtree when $Raffica\left(u, v\right)$.
		\\
		\begin{figure}[h]
			\centering  
			\includegraphics[width=1\linewidth]{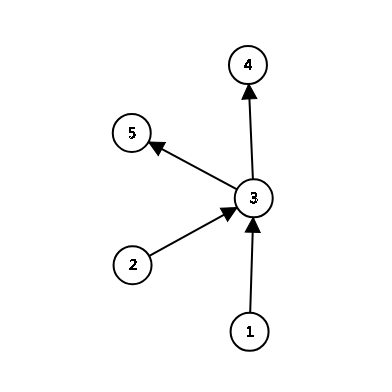}  
			\caption{Example Graph 2}  
		\end{figure}
		\\
		\par Consider the status of the picture. In a relaxation $1\to 3$, supposing 2 is the earlier father of 3. After $Raffica\left(1, 3\right)$, 2 no longer has a child 3. Vertex 1 obtains a child 3.
		\par Suppose that 4 and 5 are in queue at this time, we consider the updating $dis_4$ and $dis_5$. It is still using the old data. In other words, it still believe the best path is $2\to 3\to 4$ and $2\to 3\to 5$. But we have already known $1\to 3$ is better than $2\to 3$, so we should let 4 and 5 out of the queue, while 3 is in the queue, using the newest result, considering $1\to 3$ is the best. In the next iteration, 4 and 5 will consider $1\to 3\to 4$ and $1\to 3\to 5$ as the shortest path.
		\par If we did not do this \textit{Raffica}(i.e. \textit{SPFA}), we will update 4's son and 5's son using $2\to 3\to 4$ and $2\to 3\to 5$ in the first iteration, and using $1\to 3\to 4$ and $1\to 3 \to 5$ in the second iteration. If we would \textit{Raffica} a lot in \textit{Raffica algorithm}, and the \textit{SP Tree} is tall, $SPFA$ will be very slow, because the new data will override the old data during every iteration. Although there is a lot of Rafficas, \textit{Raffica algorithm} would not update all the subtree of \textit{SP Tree} in average. It will update the subtree of \textit{Auxiliary Tree}, which is not very tall.
		\par In fact, it is common that a graph is done many Rafficas while having a tall \textit{SP Tree}: a grid graph is one of the example, where \textit{SPFA} runs slow and \textit{Raffica algorithm} runs fast.
		\par If we need to find out a negative cycle, we consider if \textit{v} is the ancestor of \textit{v}. If yes, there exists a negative cycle. We would simply use a DFS to check it, because DFS costs the same time as Raffica.
		\par For random graphs, I use a different approach to make sure it is linear. It will be shown on Section 5.2.
		\\
		\begin{algorithmic}  
			\REQUIRE{Edges $E = {u\to v, weight}$, Source $S$, Vertex count $N$, Edge count $M$}
			\ENSURE{Distance $dis$}
			\STATE{$queue\leftarrow S$}
			\STATE{$T\leftarrow S$}
			\FOR{$i=1\to N$}
				\STATE{$dis_i \leftarrow \infty$}
				\STATE{$inqueue_i\leftarrow false$}
				\STATE{$dis_S \leftarrow 0, inqueue_S \leftarrow true$}
			\ENDFOR
			\WHILE{queue is not empty}
				\STATE{$x \leftarrow queue.front()$}
				\STATE{queue.pop}
				\IF{$inqueue_x = false $}
					\STATE{continue}
				\ENDIF
				\STATE{$inqueue_x \leftarrow false$}
				\FOR{each E adjacent by x}
					\IF{$dis_{E.u} + E.weight < dis_{E.v}$}
						\STATE{$dis_{E.v} \leftarrow dis_{E.u} + E.weight$}
						\STATE{use a DFS to check if E.v is the ancestor of x}
						\IF{E.v is the ancestor of x}
							\RETURN{Exist a negative cycle}
						\ENDIF
						\STATE{Set it as false that the inqueue label on the vertices on the subtree rooted by E.v}
						\STATE{Clear the father-son relationship of all the vertices on the subtree rooted by E.v}
						\STATE{$T_v.father$ remove child $v$}
						\STATE{$T_v.fa \leftarrow x$}
						\IF{$inqueue_{E.v} = false$}
							\STATE{$inqueue_{E.v} \leftarrow true$}
							\STATE{$queue \leftarrow E.v$}
						\ENDIF
					\ENDIF
				\ENDFOR
			\ENDWHILE
		\end{algorithmic}
		
		\section {Correctness}
		\textbf{Raffica algorithm} 
		\newtheorem*{mypro4}{Theorem}
		\begin{mypro4} For \textit{Raffica algorithm}, the \textit{Auxiliary Tree} is always a \textit{SP Tree} of `the graph consist of all the vertices in \textit{Auxiliary Tree} and the traversed edges`. \end{mypro4}
		\begin{proof}
			I use a Mathematical Induction to prove it will return a correct answer.
			Firstly, a tree consisted of a vertex obviously meets the condition.
			Considering a relax, if it causes no Raffica, obviously it meets the condition. If it causes a $Raffica(u, v)$, the subtree of v is cut, it also meets the condition.
		\end{proof}
		And now I am going to prove this algorithm will come to an end. Fist I prove a vertex cannot be Rafficaed more than \textit{N-2} times.
		Consider a vertex $X$. Except the first time visited, because \textit{Raffica algorithm} goes through the \textit{SP Tree}, BFS costs $N-1$ iterations, which is the maximum possible height of \textit{SP Tree}. For every iteration, only a vertex can Raffica $X$, because in a path from $S$ to a leaf, there is no more than one vertex in the queue.
		
		\par After the Rafficas for every vertices, it remains a BFS. So absolutely the algorithm will come to an end.

		\section {Time Complexity}
		\subsection{Worst Case}
		The worst case scenario of the \textit{Raffica algorithm} is shown on the following picture:
		\\
		\begin{figure}[h]
			\centering  
			\includegraphics[width=.7\linewidth]{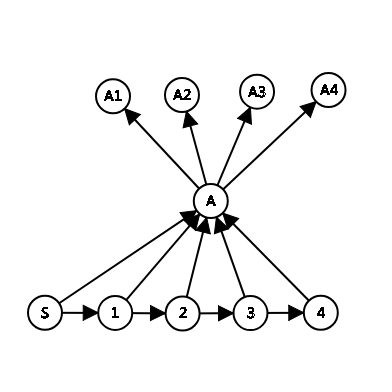}  
			\caption{Worst case}  
		\end{figure}
		\par Point A has $O(M)$ output and will be \textit{Raffica-ed} $O(N)$ times. We may reconstruct the graph by spliting the output or randomizing. The worst case scenario time complexity can be improved to $O(\dfrac{MN}{\log N})$, but it is trivial.
		\par In real life, there may be little vertices with $O(N)$ output like the above figure. In Section 8, I will show that this case can be improved. The unimprovable worst case scenario appears in a desperately extreme status, unlike the worst case of SPFA.
		\\
		\\\subsection{Random Graph}
		
		\par \textbf{Raffica algorithm}
		A BFS is obviously $O(M)$, we consider the extra complexity caused by \textit{Raffica}.
		\par Firstly, the count of \textit{Raffica} is no more than $M$(due to the correctness proof).
		
		\par On the \textit{Auxiliary Tree}, one's ancestor can not be in queue with it. Suppose there is a leaf vertex $X$. $S\to X$ is a series of edges from root to leaf.
		\newtheorem*{mypro}{Theorem}
		\begin{mypro} When an edge $E=\left(u, v\right)$ in the final SP tree is accessed, $u$ and $v$ will not be Raffica-ed. \end{mypro}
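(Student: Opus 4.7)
The plan is to prove the statement by induction on the depth of $v$ in the final SP tree (with $S$ at depth~$1$). The strengthened inductive claim that carries through the argument is the following: for every SP-tree edge $(u,v)$ there is an access event at which $dis_u$ already equals the true shortest-path distance $\mathrm{SP}(u)$, and from this event onward $dis_u$ and $dis_v$ both coincide with their SP values, so neither endpoint is ever Raffica-ed again.

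For the base case $d_v=2$ we have $u=S$, $dis_S=0=\mathrm{SP}(S)$ from initialisation, and $S$ is dequeued exactly once at the start of the main loop. When $S$'s outgoing edges are scanned, the relaxation of $(S,v)$ sets $dis_v \le 0 + w(S,v) = \mathrm{SP}(v)$; by optimality this must be equality, so no future relaxation can strictly decrease $dis_v$ and no Raffica on $v$ can ever occur. For the inductive step let $(u,v)$ be an SP-tree edge with $d_v=k+1$ and let $u'$ denote the SP-parent of $u$. By the inductive hypothesis applied to $(u',u)$ there is a moment at which $dis_u = \mathrm{SP}(u)$ and $u$ is never Raffica-ed thereafter. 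The update that established this value also flags $u$ as in-queue; thus when $u$ is eventually pulled from the queue, the scan of its outgoing edges constitutes the access of $(u,v)$, producing $dis_v \le \mathrm{SP}(u) + w(u,v) = \mathrm{SP}(v)$, and once again $dis_v$ is optimal so $v$ cannot be Raffica-ed afterwards.

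The main obstacle, and the step that needs real care, is to justify that $u$'s in-queue flag is not stripped by an ancestor Raffica between the moment $dis_u$ reaches $\mathrm{SP}(u)$ and the moment $u$ is finally dequeued. The intended argument is that any Raffica of a vertex $a$ currently an ancestor of $u$ in the Auxiliary Tree arises from a strict decrease of $dis_a$; propagating this improvement along the Auxiliary-Tree path from $a$ down to $u$ would then produce a path of length strictly less than $dis_u = \mathrm{SP}(u)$, contradicting optimality of $\mathrm{SP}(u)$. Hence no such ancestor Raffica can occur after $dis_u$ stabilises, $u$ remains in the queue with its SP value until the scan of its outgoing edges takes place, and the induction closes. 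This also gives, as a corollary used in the complexity analysis, the bound that each SP-tree edge contributes exactly one relaxation at which both endpoints are finalised.
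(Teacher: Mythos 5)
Your proposal is correct, and its kernel is the same optimality observation the paper uses, but you prove substantially more than the paper does and by a genuinely more structured route. The paper's entire proof is two sentences: once the SP-tree edge is accessed there is no better solution, and since a Raffica of a vertex requires a strict decrease of its distance label, neither endpoint can be Raffica-ed afterwards. That argument is sound as far as it goes, but it silently assumes the decisive access (the one at which $dis_u$ already equals $\mathrm{SP}(u)$) actually takes place. Your induction on the depth of $v$ supplies exactly that missing liveness half: you show each SP-tree edge is eventually scanned with $dis_u$ final, and the only real threat to this --- that $u$'s in-queue flag is wiped when a Raffica hits an Auxiliary-Tree ancestor $a$ of $u$ and clears $a$'s whole subtree --- is dispatched by the observation that every tree edge $(p,c)$ satisfies $dis_c \geq dis_p + w(p,c)$ at all times, so a strict improvement at $a$ would concatenate with the tree path $a \to u$ into a walk shorter than $\mathrm{SP}(u)$, a contradiction. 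This ancestor-clearing issue is a genuine hole in the paper's treatment (the subtree-clearing step is precisely what distinguishes Raffica from SPFA, so one cannot ignore it here), and your argument closes it; it also yields, as you note, the ``each SP-tree edge contributes one finalising relaxation'' fact that the complexity section implicitly relies on. The only cosmetic caveats: the improvement you propagate from $a$ to $u$ is a walk rather than a simple path (harmless absent negative cycles, which the existence of the final SP tree already presumes), and the base-case remark that $S$ is dequeued exactly once likewise rests on that same no-negative-cycle assumption.
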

		\begin{proof}
			There will not be any path shorter than the final \textit{SP Tree}. Once it is accessed, there are no solutions better than this solution. So both $u$ and $v$ will not be Raffica-ed.
		\end{proof}
		
		Due to \cite{1609.07269}, there is a property on the neighborhoods from $S$:\\
		\begin{equation*}
		\begin{aligned}
		W_n(u, v) &- \frac{2\log\log N}{|\log(\tau-2)|}\\
		H_n(u, v) &- \frac{2\log\log N}{|\log(\tau-2)|}\\
		D_n(u, v) &- \frac{2\log\log N}{|\log(\tau-2)|}, \tau\in(2, 3)
		\end{aligned}
		\end{equation*}
		are all tight sequence of random variables, which means they are convergent to 0 w.h.p.\\
		Here $W_n(u, v)$ stands for the distance between $u$ and $v$ on a graph consist of $n$ nodes, where $u$ and $v$ are uniformly chosen in [1, n]. $H_n(u, v)$ stands for the hop-distance between $u$ and $v$(hop-distance is defined by the count of edges on the shortest path). $D_n(u, v)$ stands for the distance between $u$ and $v$ without consideration of weight. It shows that the vertices are mainly distributed in a small interval on the hop-distance and distance.\\
		Due to \cite{Amini2012The}, the diameter is $O(\log N)$.
		\\
		
		Consider each path from the root to the leaf.
		Basically, the density of \textit{Raffica} is defined by the count of iterations between 2 Rafficas. The extra cost of $Raffica(u, v)$ is the size of the subtree rooted by $v$. The total cost of the \textit{Raffica} operation is the sum of all the extra costs.\\
		
		\begin{figure}[h]
			\centering
			\includegraphics[width=0.7\linewidth]{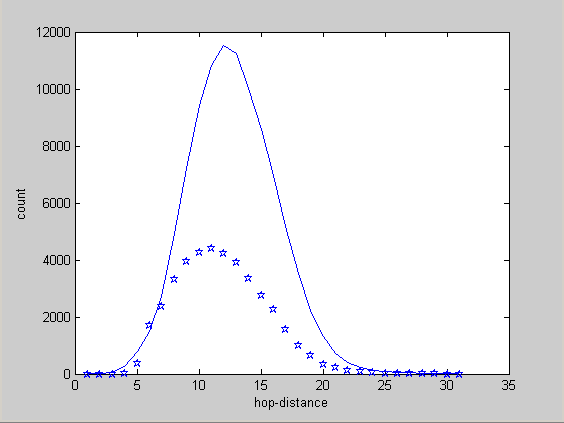}
			\caption{N = $10^5$, M = $10^6$}
		\end{figure}
		\begin{figure}[h]
			\centering
			\includegraphics[width=0.7\linewidth]{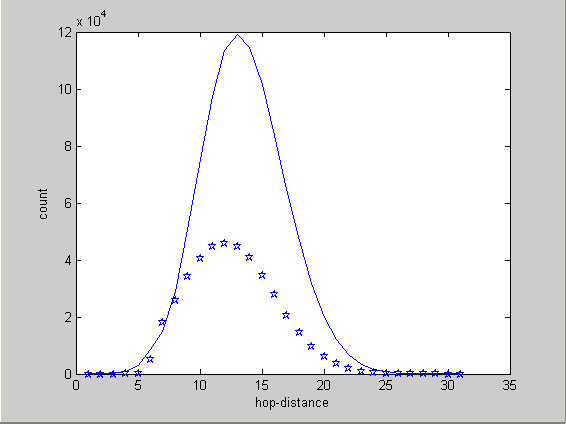}
			\caption{N = $10^6$, M = $10^7$}
		\end{figure}
		The figure 4 and 5 are my test cases. The solid curve is the count of x-depth vertices, while the dotted curve is the count of \textit{Raffica} on x-depth.\\
		
		Basically, the density of \textit{Raffica} on a sparse graph is $O(\dfrac{1}{\log N})$ w.h.p.\footnote{This result will not be used in my proof, so I make a claim without proof. For a dense graph, it can always be transformed into a sparse random graph according to Section 8.} It is not very good for my algorithm. Now I introduce my improvement mentioned above.
		\newtheorem*{claimdensity}{Theorem R}
		\begin{claimdensity}
			Suppose the density of \textit{Raffica} is a constant w.h.p., our algorithm should be linear in expectation.
		\end{claimdensity}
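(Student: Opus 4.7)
The plan is to decompose the total running time into $T = T_{\mathrm{BFS}} + T_{\mathrm{Raffica}}$, where $T_{\mathrm{BFS}} = O(M)$ by the standard BFS accounting, and $T_{\mathrm{Raffica}} = \sum_i |T_{v_i}|$ collects the extra work from all Raffica events, with $|T_{v_i}|$ the size of the subtree rooted at $v_i$ in the Auxiliary Tree at the moment of the $i$-th Raffica. The goal is to show $\mathbb{E}[T_{\mathrm{Raffica}}] = O(M+N)$ under the constant-density hypothesis, and then combine with $T_{\mathrm{BFS}}$.

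First I would bound the total number of Raffica events. Combining the diameter estimate $D=O(\log N)$ from \cite{Amini2012The} with the tightness of $H_n(u,v)$ around $2\log\log N/|\log(\tau-2)|$ from \cite{1609.07269}, the BFS-like procedure runs for $O(\log N)$ iterations w.h.p. The hypothesis that the density of Raffica is $\Theta(1)$ then says that across these iterations the Rafficas are spaced by $\Theta(1)$ iterations per queued vertex, so the total count of Rafficas is $\Theta(N)$ after using the earlier theorem that any edge of the final SP Tree, once accessed, freezes its endpoints and so contributes only $O(1)$ future Rafficas.

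Second I would bound the expected size $\mathbb{E}[|T_{v_i}|]$. The local structure around any vertex in a configuration-model graph with bounded-variance degrees is well approximated by a Galton--Watson branching process, and the hop-depth concentration from \cite{1609.07269} shows that the Auxiliary Tree has the same shallow profile w.h.p. Because the branching-process progeny starting from a typical descendant has finite mean, and because Raffica-eligible vertices sit near the active BFS frontier (not deep inside stabilized parts of the tree), the expected subtree size at a Raffica event is $O(1)$. Summing, $\mathbb{E}[T_{\mathrm{Raffica}}] = O(N) \cdot O(1) = O(N)$, giving expected total time $O(M+N)=O(M)$ for sparse $M = \Theta(N)$.

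The main obstacle is the second step. The subtree $T_{v_i}$ is not sampled from the stationary distribution of a random BFS tree but at a data-dependent stopping time, and the sizes $|T_{v_i}|$ at different Rafficas are correlated through the shared history of the tree. I expect a coupling of the online Auxiliary Tree with an auxiliary Galton--Watson forest, combined with an optional-stopping or potential-function argument that charges each unit of subtree cost to a distinct vertex visit, will be needed to transfer the branching-process expectation to the actual algorithm. A secondary subtlety is to make the meaning of ``density'' in the hypothesis precise (per-iteration versus per-vertex) so that the counting in the first step matches the subtree bound in the second; I would formalize density as the expected number of Rafficas per BFS iteration, which is the reading consistent with the $O(1/\log N)$ remark preceding the theorem.
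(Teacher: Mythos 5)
Your decomposition is the same one the paper uses: total time equals the BFS cost $O(M)$ plus the sum over Raffica events of the size of the torn-down subtree, and the whole burden falls on showing that each such subtree has expected size $O(1)$ when Rafficas are spaced $\Theta(1)$ levels apart. Where you diverge is in how that key bound is justified. The paper does not invoke a Galton--Watson coupling at the Raffica stopping times; instead it introduces the generation sizes $\Gamma_i(S)$ and their ratios $\Gamma'_i(S)=\Gamma_i(S)/\Gamma_{i-1}(S)$, proves a lemma (via the ball-growth and typical-distance estimates it cites for the configuration model) that this ratio is bounded by any constant $\epsilon_0>1$ for a uniformly chosen vertex, and then writes the total Raffica cost as $O\bigl(\tfrac{M}{N}\sum_{i=1}^{N}\Gamma_{\mathrm{depth}_i-2}(S)/\Gamma_{\mathrm{depth}_i}(S)\bigr)$, bounding each summand by a constant. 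In effect the paper converts ``a subtree two levels deep is small on average'' into a statement about static neighborhood growth rates, whereas you propose to prove it dynamically by coupling the online Auxiliary Tree to a branching forest. Your route is arguably more principled but heavier; the paper's route is shorter but leans on the cited growth estimates and on an unargued identification of the dynamic subtree at a Raffica with a static two-generation slice of the BFS tree. Your count of Raffica events also differs superficially ($\Theta(N)$ via the diameter and the frozen-edge theorem, versus the paper's cruder ``at most $M$'' from the correctness section), but this washes out for sparse graphs.

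The genuine gap is the one you yourself flag: the subtree $T_{v_i}$ is observed at a data-dependent stopping time and the sizes at successive Rafficas are correlated through the shared history of the tree, so transferring a branching-process or generation-ratio expectation to the actual algorithm requires an argument you have not supplied. You should be aware that the paper does not supply it either --- its final sum silently substitutes the static ratios $\Gamma_{\mathrm{depth}_i-2}/\Gamma_{\mathrm{depth}_i}$ for the realized subtree sizes without justifying the exchange, and its Lemma 1 concerns a uniformly random vertex rather than the vertices actually selected by the Raffica dynamics. So your proposal reaches the same level of rigor as the paper at a different point of attack; to complete either version one would need precisely the coupling or potential-function argument you describe as the missing piece.
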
 
	Therefore, the improvement is:\par
		For it is a random graph, we can predict the average distribution of x-depth vertices and x-depth \textit{Raffica} counts before the algorithm. We randomly add some \textit{Raffica} due to the distribution, so that the density tends to 0.5 in probability. More precisely, we select some edges that cannot fit the triangle inequality during the iteration, and disassemble its subtree. The two vertices connected by the edge should be on the adjacent depth. Those chosen edges would be \textit{Raffica}ed due to the selection of probability.
		\par It seems that adding more \textit{Raffica} will simply make the algorithm slower. However, it cuts down the average size of the subtree, making its total average time shorter.
		\\\textbf{Proof of Theorem R}
		
		Denote the count of vertices that is on the i-depth from S as $\Gamma_{i}(S)$, and $\Gamma'_{i}(S) = \dfrac{\Gamma_{i}(S)}{\Gamma_{i-1}(S)}$. Easily, $\Gamma'_{i}(S)$ is monotone decreasing with variable $i$.\\
		\newtheorem*{lemmatonari}{Lemma 1}
		\begin{lemmatonari}
			$\mathbb{P}(\Gamma'_{dis_u}(S) < \epsilon_0) = 1$, for any $\epsilon_0>1$ and $u$ uniformly chosen in $[1, N]$.
		\end{lemmatonari}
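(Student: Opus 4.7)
The plan is to exploit the stated monotonicity of $\Gamma'_i(S)$ in $i$, together with the concentration of BFS level sizes in the configuration model, and then to run a counting argument showing that a uniformly chosen $u$ almost surely lies at a depth where the growth ratio has already fallen below any fixed $\epsilon_0>1$. Define the threshold depth $i_{\epsilon_0} := \min\{\, i : \Gamma'_i(S) < \epsilon_0 \,\}$. By monotonicity of $\Gamma'_i(S)$, the event $\{\Gamma'_{dis_u}(S) < \epsilon_0\}$ contains $\{dis_u \ge i_{\epsilon_0}\}$, so since $u$ is uniform in $[1,N]$ the problem reduces to showing
\begin{equation*}
\sum_{j < i_{\epsilon_0}} \Gamma_j(S) \;=\; o(N) \quad \text{w.h.p.}
\end{equation*}

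For this, I would couple the BFS exploration from $S$ with a Galton--Watson branching process whose offspring distribution is the size-biased degree sequence---a standard coupling for the configuration model used implicitly in \cite{1609.07269}, valid as long as the explored cluster is still $o(\sqrt{N})$. Combined with the $\log\log N$ tightness of $H_n(u,v)$ recalled above and the $O(\log N)$ diameter bound of \cite{Amini2012The}, this coupling should give, for every $j < i_{\epsilon_0}-1$, a uniform lower bound $\Gamma'_j(S) \ge c_0$ for some constant $c_0 > 1$ depending only on the mean degree. A geometric summation then yields
\begin{equation*}
\sum_{j < i_{\epsilon_0}} \Gamma_j(S) \;\le\; \frac{c_0}{c_0-1}\,\Gamma_{i_{\epsilon_0}-1}(S),
\end{equation*}
reducing the whole lemma to a bound on the peak level.

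It then remains to control $\Gamma_{i_{\epsilon_0}-1}(S) = o(N)$, for which the plan is to invoke the sharp-transition character of configuration-model neighbourhoods: the hop-distance from $S$ to a uniform vertex concentrates on an $O(1)$-wide window around $2\log\log N / |\log(\tau-2)|$, so no single level may carry a positive fraction of the vertex set without contradicting the tightness cited from \cite{1609.07269}. Combining the two bounds gives $\mathbb{P}(dis_u < i_{\epsilon_0}) = o(1)$, which is the lemma. The principal obstacle, as I see it, is precisely this last step: converting the qualitative tightness statement for $H_n(u,v)$ into a quantitative $o(N)$ upper bound on a single generation size. I expect that a second-moment or branching-process-martingale argument on the generation sizes will be required alongside the stated monotonicity of $\Gamma'_i$, and that handling the boundary case where $\Gamma'_{i_{\epsilon_0}-1}$ is very close to $\epsilon_0$ will absorb most of the technical effort.
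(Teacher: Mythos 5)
Your reduction via monotonicity of $\Gamma'_i(S)$ is sound and is, implicitly, also the skeleton of the paper's argument: both proofs come down to showing that a uniformly chosen $u$ lies w.h.p. at a depth where the growth ratio has already collapsed below $\epsilon_0$. But the two proofs choose dual cutoffs, and yours places the hard work exactly where it cannot be done. The paper anchors its threshold at the ball $T_a(\alpha_N)$ containing $\alpha_N=\lfloor\log^3 N\rfloor$ vertices, so that ``only $o(N)$ vertices are shallower'' holds by construction, and then spends its effort showing that the ratio at that depth is already essentially $1<\epsilon_0$, by inverting the cited increment bound $T_a(i+1)-T_a(i)\leqslant e^{\frac{v-1}{1+\epsilon}i}$ to obtain $\Gamma'_{depth_i}\approx\bigl(i/(i-1)\bigr)^{1/(3\rho)}\to 1$. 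You instead define the cutoff $i_{\epsilon_0}$ by the ratio condition, so the ratio bound is automatic by monotonicity, and must then prove $\sum_{j<i_{\epsilon_0}}\Gamma_j(S)=o(N)$, and in particular $\Gamma_{i_{\epsilon_0}-1}(S)=o(N)$.

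That last step is not merely the ``principal obstacle'' you name; the justification you offer for it is backwards. Tightness of $H_n(u,v)-\frac{2\log\log N}{|\log(\tau-2)|}$ says the hop-distance to a uniform vertex concentrates in an $O(1)$-wide window of levels; by pigeonhole this \emph{forces} at least one level to carry a constant fraction of the $N$ vertices rather than forbidding it. So you cannot conclude ``no single level carries a positive fraction'' from that citation, and if the macroscopic level happened to sit at depth $i_{\epsilon_0}-1$ your bound would simply be false. The only way your decomposition closes is by independently showing that $i_{\epsilon_0}$ occurs very early, before the levels become macroscopic, i.e.\ that the growth ratio drops below $\epsilon_0$ already at polylogarithmic ball sizes. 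That is precisely the quantitative content the paper extracts from the $T_a$ increments of \cite{CornellThe}, and it is absent from your proposal; your Galton--Watson geometric-sum step only reduces the cumulative count to the peak level and does not touch this issue. To repair the argument, replace the appeal to tightness of $H_n$ by the ball-growth estimate at $i=\alpha_N$, which simultaneously gives $i_{\epsilon_0}\leqslant T_a(\alpha_N)$ and $\sum_{j<i_{\epsilon_0}}\Gamma_j(S)\leqslant\alpha_N=o(N)$.
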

		\begin{proof}
			Due to \cite{Adriaans2017Weighted}, $dis_u$ tends to infinity w.h.p.
			$T_a(i)$ was defined by the radius of the ball centered by $S$ that contains $k$ vertices. More precisely,
			\begin{equation*}
			T_a(i) = r|\{x\in [1, N]|dis_x <= r\}| = i
			\end{equation*}
			 According to \cite{CornellThe}, for $\alpha_N = \lfloor \log^3 N \rfloor$, $\beta_N = \lfloor \sqrt{\dfrac{\mu}{v-1}N\log N} \rfloor$(two balls of size at least $\beta_N$ intersect almost surely), $\alpha_N \leqslant i \leqslant \beta_N$, some constant number $v, \epsilon, \mu$, define $\rho = \dfrac{v-1}{1+\epsilon}$,
			\begin{equation*}
			\begin{aligned}
			T_a(i+1) - T_a(i) &\leqslant e^{\frac{v-1}{1+\epsilon}i}\\
			&= i^{3\rho}
			\end{aligned}
			\end{equation*}
			For $\Gamma'_{i}(S)$ is monotone decreasing with $i$, $T_a(i)$ is monotone increasing with $i$. We would only consider $T_a(\alpha_N)$, for the corresponding $\Gamma'_{depth_i}$ is largest. Its inverse function is:
			\begin{equation*}
			T_a^{-1}(i) = i^{\frac{1}{3\rho}}
			\end{equation*}
			The inverse function satisfies $T_a^{-1}(i) \propto \Gamma_{depth_i}$. Therefore,
			\begin{equation*}
			\begin{aligned}
			i &= \alpha_N\\
			\Gamma'_{depth_i} &= \frac{\Gamma_{depth_i}}{\Gamma_{depth_i-1}}\\
			&=\frac{T_a^{-1}(i)}{T_a^{-1}(i-1)}\\
			&=\frac{i^{\frac{1}{3\rho}}}{(i-1)^{\frac{1}{3\rho}}}
			&=1
			\end{aligned}
			\end{equation*}
			Meanwhile, $\mathbb{P}(dis_u < T_a(\alpha_N)) = 0$; when $dis_u > T_a(\beta_N)$, $\Gamma'_{depth_i} < 1$. Q.E.D.

		\end{proof}
		
		The expecting time of \textit{Raffica} is:
		\begin{equation*}
		\begin{aligned}
		T_{Raffica}(N) &= O(M/N *  \sum_{i=1}^{N}\frac{\Gamma_{depth_i-2}(S)}{\Gamma_{depth_i}(S)})\\
		&\leqslant O(M/N * N * \max{\frac{\Gamma_{depth_i-2}(S)}{\Gamma_{depth_i}(S)}})\\
		&= O(M/N) * O(N)\\
		&= O(M)
		\end{aligned}
		\end{equation*}
		\\
		\par It is worth mentioning that this optimization should not work better than the original one on small test cases. This optimization always makes the enter-queue count tend to $O(M)$, while the original one makes the enter-queue count tend to $O(M\log N)$. The factor under the signal $O$ causes this phenomenon.

		\par Now we know that the total cost by \textit{Raffica} is $O(M)$. The question turns out to be how much time does other operations cost. It looks like very slow to check and update a subtree of $E.v$, because each DFS may cost $O(M)$. However, the subtree to be maintained and to be checked, is the same size as \textit{Raffica} operator, making the other operations using DFS the same cost as \textit{Raffica}. \textbf{The conclusion is, \textit{Raffica Algorithm} has $\Theta(M)$ time complexity in expect on random graph.}
		\newline
		\\
		\textbf{SPFA} For \textit{SPFA}, I give an upper bound. The difference between $SPFA$ and \textit{Raffica algorithm} is \textit{SPFA} does not clear the in-queue label of the subtree of the \textit{Dark Point}. According to \cite{Amini2012The}, the diameter of a random graph is $D = O(\log N)$. The upper bound is $O(M\log N)$.
		
		\par When \textit{SPFA} deals with negative cycle, its expect time complexity is $\Theta(MN)$, while \textit{Raffica algorithm} is $\Theta(M)$. Because \textit{SPFA} judges a negative cycle by checking how many times any vertices be in queue. If one enters the queue $N$ times, there turns out to be a negative cycle. For each iteration, there are totally expecting $\Theta(N)$ vertices in queue. \textit{Raffica algorithm} gets the existence of a negative cycle when it traverses a negative cycle. Therefore, the time complexity is $\Theta(M)$.
		\\
		\\
		\subsection{Traffic Problem or Grid Graph}
		
		\begin{figure}[h]
			\centering  
			\includegraphics[width=0.7\linewidth]{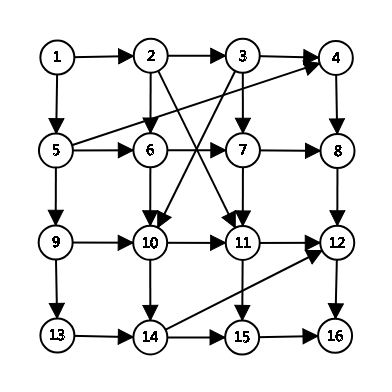}  
			\caption{A grid graph}  
		\end{figure}
		
		\par Now we consider a grid or a near-grid graph. The diameter of these graphs are $\Theta(N)$, and the counts of out degrees are $O(1)$. The weights of the edges satisfy the same distribution as the model of above random graph. A grid graph or near-grid graph is often seen in real life. I call it a traffic problem.
		\par\textit{SPFA} runs slow in this graph, while \textit{Raffica algorithm} runs in linear complexity.
		\par Using a similar way, when $N$ tends to infinity, the density of \textit{Raffica} is made to be $0.5$. In the same way, the time is $\Theta(M)$.
		\par Consider \textit{SPFA}. The diameter of the grid graph is $\Theta(N)$, therefore the time of \textit{Raffica} is $\Theta(N)$. The density of \textit{Raffica} tends to $\widetilde O(1)$. The total cost is $\Theta(N) * \Theta(N) * \widetilde O(1) = \widetilde\Theta(N^2)$ in expect.
		\par In fact, \textit{SPFA}'s time complexity depends on the height of the \textit{SP tree} and the density of \textit{Raffica}. Actually, the grid graph is not the only one of the graphs that \textit{SPFA} runs slow. If both the height and the density are huge, \textit{SPFA} also runs slow. This status often appears in traffic SSSP problems.
		
		\section{Comparison}
		\begin{table*}[h]
			\centering
			\caption{SSSP}
			\begin{tabular*}{500pt}{@{\extracolsep{\fill}}|c|c|c|c|c|c|}
				\hline
				&Dijkstra(Fib Heap)	&Bellman-Ford	&SPFA			&Raffica	&Thorup\cite{Thorup:1999:USS:316542.316548}\\
				\hline
				Non-negative weighted Random Graph	&$O(M+N\lg N)$		&$O(MN)$		&$O(MD)$	&$O(M)$ &$O(M)$\\
				\hline
				Worst Case		&$O(M+N\lg N)$		&$O(MN)$		&$O(MN)$		&$O(MN)$ &$O(M)$\\
				\hline
				Negative Cycle
				Random Graph	&unable				&$O(MN)$		&$O(MN)$		&$O(M)$ &unable\\
				\hline
				Arbitrary Weight
				Random Graph	&unable				&$O(MN)$		&$O(MD)$	&$O(M)$ &unable\\
				\hline
				Traffic Problem\footnotemark	&$O(M+N\lg N)$		&$O(MN)$		&$O(MN)$		&$O(M)$ &$O(M)$\\
				\hline
				Directed Graph	&able				&able			&able			&able	&unable\\
				\hline
			\end{tabular*}
		\end{table*}
		\begin{table*}[h]
			\caption{APSP}
			\centering
			\begin{tabular*}{500pt}{@{\extracolsep{\fill}}|c|c|c|c|c|c|}
				
				\hline
				&Hagerup\cite{hagerup2000improved}	&Floyd		&SPFA			&Raffica &Thorup\\
				\hline
				Non-negative Weighted Random Graph	&$O(MN+N^2\lg\lg N)$		&$O(N^3)$		&$O(MND)$	&$O(MN)$ &$O(MN)$\\
				\hline
				Worst Case		&$O(MN+N^2\lg\lg N)$		&$O(N^3)$		&$O(N^2M)$	&$O(N^2M)$ &$O(MN)$\\
				\hline
				Arbitrary Weight
				Random Graph	&$O(MN+N^2\lg\lg N)$		&$O(N^3)$	&$O(MND)$	&$O(MN)$ &unable\\
				\hline
				Directed Graph	&able				&able			&able			&able	&unable\\
				\hline
			\end{tabular*}
		\end{table*}
		I use two tables to compare the performance of some classical algorithms and my algorithm on SSSP and APSP.
		\par In random graph or traffic problem, Raffica algorithm has linear complexity, which is absolutely fastest. Thorup's algorithm is also linear, which should be the fastest too. But it can only handle undirected graph, and it is very complex. 
		
		\section {Applications}
		\subsection{System of Difference Constraints}
		\par System of Difference Constraints is a problem handling a series of inequality $X_i - X_j < k$. It can be easily transformed to a SSSP problem. It is widely used to many applications, such as temporal reasoning.
		\par Set a vertex S as the super source vertex. Transform the inequality $X_i - X_j < k$ to $X_i + \left(-k\right) < X_j$. For each inequality, add an edge $X_i\to X_j$ with weight $-k$. For each vertex X, add an edge $S\to X$ with weight 0. Then regard S as source, the solution of System of Difference Constraints is the solution of SSSP. If there exists any negative cycle, the solution does not exist.
		\begin{figure}[h]
			\centering  
			\includegraphics[width=.7\linewidth]{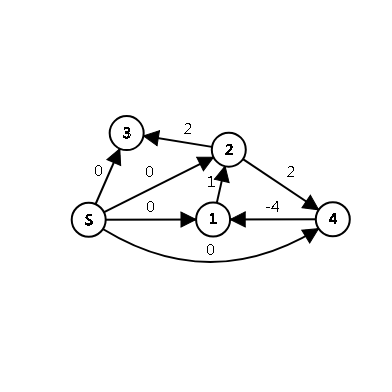}  
			\caption{System of Difference Constraints}  
		\end{figure}
		
		The above graph stands for these constraints:
		\begin{equation*}
		\begin{aligned}
		x_1-x_2&<-1\\
		x_2-x_3&<-2\\
		x_2-x_4&<-2\\
		x_4-x_1&<4
		\end{aligned}
		\end{equation*}
		\\
		\par This system has no solutions because $1\to 2\to 4\to 1$ is a negative cycle.
		\\\par
		We want to know if this problem has any solutions. Therefore we want to find if this SSSP problem has a negative cycle.\par
		\textit{Bellman-Ford Algorithm} and \textit{SPFA} cannot solve the find-negative-cycle problem very fast. If the problem is random one,  \textit{Raffica algorithm} can solve it in linear time.

		\subsection{Detecting the minimum average weight cycle}
		An average weight of a cycle denotes the total sum of the weights of the cycle divided by the total count of the cycle. Karp's algorithm\cite{KARP1978309} solves it in $O(MN)$. If the graph can be regarded as a random graph, \textit{Raffica algorithm} can solve it in $O(M\lg W)$($W$ stands the max weight of the edges) by a simple dichotomy.

		\section{Open problem}
		\begin{figure}[h]
			\centering  
			\includegraphics[width=.6\linewidth]{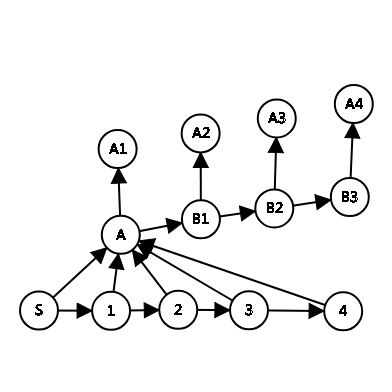}  
			\caption{Reconstructed Worst case}  
		\end{figure}
		\begin{figure}[h]
			\centering  
			\includegraphics[width=.6\linewidth]{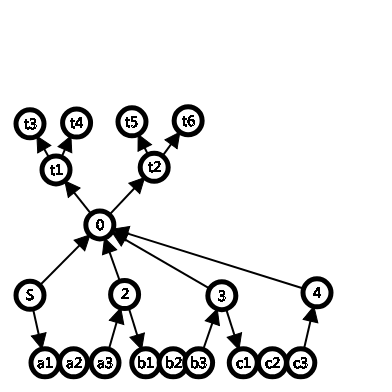}  
			\caption{Worst case that can hardly be reconstructed}  
		\end{figure}
		\par Figure 8 is the worst case scenario. This case may often appear in real life. We can easily transform it to the figure 6 scenario. For the vertex with many out degrees, we separate these edges into those vertices. It is easy to see the correctness of the transformation. And it is easier(linear) to solve by \textit{Raffica algorithm}.
		\\
		\par Figure 9 is another status. The 0 vertex has a subtree looked like a binary-tree. We cannot handle it like the upper one. But we can hardly see it in real life. What we can do is to change the method of searching: not BFS or DFS, but an IDFS. I cannot quantitative the effect of this optimization yet.
		\par In conclusion, if there is a vertex with a subtree on the \textit{SP Tree} having $O(N)$ vertices in a particular depth, and we visit it in a particular way, so that those $O(N)$ vertices updated many times. And that is the worst case.
		\par Another feature of the worst case is: there are some vertices visited many times. But even if we separate the in-degree, the in-degree may also appear like a binary-tree.
		\par In this way, we can separate the in-degree and out-degree, improving the worst case to $\Theta(MN/\log N)$. Reconstructing the graph remains an open problem.
		\par We can also use priority queue to improve the worst case. Using an evaluation function, we can let it have a higher priority where the size of sub-tree is small. It can solve the binary-tree status, but it can not tackle all the statuses.

		\section{Acknowledgments}
		I would like to acknowledge Mr.Wang, Mr.Wen and anyone who has supported me, and those who used this algorithm to real problem settling, giving me priceless test cases and examples. Thanks to Prof.Andrew for informing me my algorithm is very similar to Tarjan's unpublished algorithm, to Prof.Tarjan for pointing out my theoretical mistakes. Thanks to Mr.Zhou for checking language mistakes.
		
		\bibliographystyle{abbrv}
		\bibliography{bib}  
		
		\nocite{lotstedt1986numerical}
		\nocite{Edmonds1972Theoretical}
		\nocite{Ravindra1990Faster}
		\nocite{Cherkassky1999Negative}
		\nocite{Goldberg1993Scaling}
		\nocite{Andrew1993A}
		\nocite{Boykov2002Fast}
		\nocite{Cherkassky2010Shortest}
		\nocite{Pantziou1992Efficient}
		\nocite{Krishnendu2014Approximating}
		\nocite{goldberg1995scaling}
		\nocite{Andrew1993A}
		\nocite{Ravindra1990Faster}
		\nocite{Lysgaard1995A}
		\nocite{unknown}

	\end{document}